\documentclass[10pt]{IEEEtran}
\usepackage{amsmath,amsthm,amssymb}
\usepackage{graphicx,epstopdf,epsfig}
\usepackage{float}
\usepackage{xcolor}	
\usepackage{url}
\usepackage[caption=false,font=footnotesize]{subfig}

\newtheorem{theorem}{Theorem}
\newtheorem{lemma}[theorem]{Lemma}
\newtheorem{proposition}[theorem]{Proposition}
\newtheorem{assumption}[theorem]{Assumption}
\newtheorem{definition}[theorem]{Definition}

\allowdisplaybreaks

\def\x{\boldsymbol{x}}
\def\u{\boldsymbol{u}}
\def\v{\boldsymbol{v}}
\def\D{\mathcal{D}}
\def\TH{\boldsymbol{\theta}}
\def\lb{[\![}
\def\rb{]\!]}

\begin{document}

\title{On the Proof of Fixed-Point Convergence for Plug-and-Play ADMM}
\author{Ruturaj~G.~Gavaskar and Kunal~N.~Chaudhury,~\IEEEmembership{Senior~Member,~IEEE}}

\maketitle

\begin{abstract}
In most state-of-the-art image restoration methods, the sum of a data-fidelity and a regularization term is optimized using an iterative algorithm such as ADMM (alternating direction method of multipliers). In recent years, the possibility of using denoisers for regularization has been explored in several works. A popular approach is to formally replace the proximal operator within the ADMM framework with some powerful denoiser. However, since most state-of-the-art denoisers cannot be posed as a proximal operator, one cannot guarantee the convergence of these so-called plug-and-play (PnP) algorithms. In fact, the theoretical convergence of PnP algorithms is an active research topic. In this letter, we consider the result of Chan et al. (IEEE TCI, 2017), where fixed-point convergence of an ADMM-based PnP algorithm was established for a class of denoisers. We argue that the original proof is incomplete, since convergence is not analyzed for one of the three possible cases outlined in the paper. Moreover, we explain why the argument for the other cases does not apply in this case. We give a different analysis to fill this gap, which firmly establishes the original convergence theorem.
\end{abstract}

\begin{IEEEkeywords}
regularization, plug-and-play, ADMM, convergence analysis.
\end{IEEEkeywords}

\section{Introduction}
\label{sec:intro}

A variety of image restoration problems, such as superresolution, deblurring, compressed sensing, tomography etc., are modeled as optimization problems of the form
\begin{equation}
\label{eq:main_prob}
\min_{\x \in \mathbb{R}^d} \quad f(\x) + \lambda g(\x),
\end{equation}
where the data-fidelity term $f(\x)$ is derived from the degradation and noise models, while the regularizer $g(\x)$ is derived from some prior on the ground-truth image \cite{Gunturk2012_img_restoration}. Traditionally, the regularizer is a sparsity-promoting function in some transform domain \cite{Elad2007_priors}.
In recent years, researchers have explored the possibility of using powerful Gaussian denoisers such as NLM \cite{Buades2005_NLM}  and BM3D \cite{Dabov2007_BM3D} for regularization purpose.
In \cite{Romano2017_RED,Reehorst2019_RED_clarifications}, the regularizer is explicitly constructed from a denoiser. 
On the other hand, for plug-and-play (PnP) methods  \cite{Venkatakrishnan2013_PnP,Sreehari2016_PnP,Ono2017_PD_PnP,Kamilov2017_PnP_ISTA}, the denoiser is formally substituted in place of the proximal operator in iterative algorithms such as FISTA \cite{Beck2009_FISTA}, primal-dual splitting \cite{Chambolle2011_PD}, and ADMM \cite{Boyd2011_ADMM}. 

The focus of this work is on an ADMM-based PnP method \cite{Chan2017_PnP_fixed_point}. We recall that the ADMM based solution of \eqref{eq:main_prob} involves the following steps  \cite{Boyd2011_ADMM}:
\begin{align}
\x_{k+1} &= \arg \min_{\x} \ f(\x) + (\rho/2) \lVert \x - (\v_k - \u_k) \rVert^2, \label{eq:ADMM_x_update} \\
\v_{k+1} &= \arg \min_{\v} \ \frac{\rho}{2 \lambda} \lVert \v - (\x_{k+1} + \u_k) \rVert^2 + g(\v), \label{eq:ADMM_v_update} \\
\u_{k+1} &= \u_k + \x_{k+1} - \v_{k+1} \label{eq:ADMM_u_update},
\end{align}
where $\rho>0$ is a penalty parameter and $\lVert \cdot \rVert$ is the Euclidean norm (this is the rescaled form of ADMM).
If $f$ and $g$ are convex, then under some technical conditions, the iterates $(\x_k)$ are guaranteed to converge to a fixed-point, which is the global minimizer of \eqref{eq:main_prob}.
Now, \eqref{eq:ADMM_v_update} corresponds to regularized Gaussian denoising, where $g$ assumes the role of the regularizer \cite{Hunt1977_bayesian_methods}. Based on this observation, the original proposal in 
\cite{Venkatakrishnan2013_PnP} was to replace the proximal operation \eqref{eq:ADMM_v_update} with an off-the-shelf denoiser, i.e., the $\v$-update is replaced by $\v_{k+1} = \D_{\sigma}(\x_{k+1} + \u_k)$, where $\D_{\sigma}$ is the denoiser in question. The idea is simply to exploit the excellent denoising capability of state-of-the-art denoisers for restoration, even though we might not be able to conceive  them as proximal operators (of some regularizer). We refer the readers to \cite{Venkatakrishnan2013_PnP,Sreehari2016_PnP} for  a detailed account. The technical challenge, however, is that the resulting sequence of operations, referred to as PnP-ADMM, need not necessarily correspond to an optimization problem. As a result, the convergence of the iterates is at stake. In particular, we can no longer relate to the optimization in \eqref{eq:main_prob} and use existing results \cite{Boyd2011_ADMM} to ensure convergence. Nevertheless, PnP-ADMM is often found to converge empirically and yields high-quality reconstructions in several applications \cite{Venkatakrishnan2013_PnP,Sreehari2016_PnP,Chan2017_PnP_fixed_point,Teodoro2019_PnP_fusion}.
Among other things, questions relating to the convergence and optimality of PnP-type methods have been studied in recent works.
In \cite{Sreehari2016_PnP}, convergence guarantees were derived for a kernel-based denoiser for PnP-ADMM.
Later, it was shown in \cite{Chan2017_PnP_fixed_point} that the convergence can be ensured for a broad class of denoisers. Apart from ADMM, PnP algorithms based on various iterative methods have been explored in \cite{Ono2017_PD_PnP,Kamilov2017_PnP_ISTA,Sun2019_PnP_SGD,Meinhardt2017_learning_prox_op,Ryu2019_PnP_trained_conv,Dong2018_DNN_prior}.
We note that denoisers have also been used for regularization purpose in \cite{Brifman2016_PnP_superresolution,Teodoro2019_targeted_PnP,Tirer2019_iter_denoising,Tirer2019_superres_CNN_denoiser,Fletcher2018_PnP_analysis,Yazaki2019_interp_graph_PnP,Chen2018_superres_PnP}.
The relation of PnP-ADMM with graph Laplacian-based regularization was investigated in \cite{Chan2019_PnP_graph_SP}, whereas in \cite{Buzzard2018_PnP_CE} a framework motivated by PnP, called Consensus Equilibrium, was proposed.

In this letter, we revisit the proof of convergence of the PnP-ADMM algorithm in \cite{Chan2017_PnP_fixed_point} and address an inadequacy therein. It was proved that, under suitable assumptions, the sequence of iterates $(\x_k,\v_k,\u_k)_{k \geq 1}$ generated by this algorithm converges to a fixed-point, for any arbitrary initialization $(\x_0,\v_0,\u_0)$. Instead of a fixed $\rho$, an adaptive $\rho_k$ is used in \cite{Chan2017_PnP_fixed_point}, which
plays an important role in the proof. However, this necessitates the use of a case-by-case approach conditioned on the adaptation rule (see Section \ref{sec:background} for details). Of the three cases considered in the paper, convergence was proved for the first two cases. It was claimed that convergence for the third case automatically follows from that of the first two cases. However, we argue that this is generally not true and
hence a separate proof is needed for the third case. We give such a proof, which differs from the proof for the first two cases in \cite{Chan2017_PnP_fixed_point}. In particular, we show that the  difference between successive iterates is bounded by a \textit{piecewise geometric sequence}, as opposed to a geometric sequence for the first two cases.
We prove that this sequence is summable, which is used to show that the iterates $(\x_k,\v_k,\u_k)_{k \geq 1}$  form a Cauchy sequence (and is hence convergent). 

The remaining letter is as follows.
In Section \ref{sec:background}, we review the algorithm in \cite{Chan2017_PnP_fixed_point}. In particular, we discuss the convergence result and explain why the proof is incomplete. The outline of the proof is provided in Section \ref{sec:results}, while 
the technical details are deferred to Section \ref{sec:proofs}.

\section{Background}
\label{sec:background}

As mentioned earlier, the updates in PnP-ADMM \cite{Chan2017_PnP_fixed_point} are modeled on the ADMM updates (\ref{eq:ADMM_x_update})--(\ref{eq:ADMM_u_update}), with the following changes: a denoiser is used in the $\v$-update, and  $\rho$ is updated in each iteration. In particular, the updates are given by
\begin{align}
\x_{k+1} &= \arg \min_{\x} \ f(\x) + \frac{\rho_k}{2} \lVert \x - \v_k + \u_k \rVert^2, \label{eq:x_update}\\
\v_{k+1} &= \D_{\sigma_k} (\x_{k+1} + \u_k), \label{eq:v_update}\\
\u_{k+1} &= \u_k + \x_{k+1} - \v_{k+1}, \label{eq:u_update}
\end{align}
where $\sigma_k = \sqrt{\lambda/\rho_k}$. Here $\D_{\sigma} : \mathbb{R}^d \to \mathbb{R}^d$ is a denoising operator,  where the parameter $\sigma$ controls its denoising action. It was proposed to update $\rho_k$ based on the residual
\begin{equation}
\label{Dk}
\Delta_{k+1} = D(\TH_k,\TH_{k+1}),
\end{equation}
where the metric $D$ on $\mathbb{R}^d \times \mathbb{R}^d \times \mathbb{R}^d$ is defined as
\begin{equation*}
D(\TH_1,\TH_2) = \frac{1}{\sqrt{d}} \left( \lVert \x_{1} - \x_{2} \rVert + \lVert \v_{1} - \v_{2} \rVert + \lVert \u_{1} - \u_{2} \rVert \right);
\end{equation*}
$\TH_1=(\x_1,\v_1,\u_1), \TH_2=(\x_2,\v_2,\u_2)$, and the three components of $\TH_1$ and $\TH_2$ are vectors in $\mathbb{R}^d$. Thus, \eqref{Dk}  is simply the distance between the $k$-th and $(k+1)$-th iterates, which measures the progress made by the algorithm. The exact rule proposed in \cite{Chan2017_PnP_fixed_point}  is as follows:
\begin{equation}
\label{eq:rho_update}
\rho_{k+1} = 
\begin{cases}
\gamma \rho_k, & \text{ if } \Delta_{k+1} \geq \eta \Delta_k \quad (\text{condition } C_1) \\
\rho_k, & \text{ if } \Delta_{k+1} < \eta \Delta_k  \quad (\text{condition } C_2),
\end{cases}
\end{equation}
where $\gamma > 1$ and $0 < \eta < 1$ are predefined parameters. The above rule, in effect, decreases the denoising strength $\sigma$ ($\rho$ is increased) if the ratio of the current and previous residuals is greater than $\eta$; else, $\rho$ is kept unchanged (see \cite{Chan2017_PnP_fixed_point} for details).

It was claimed in \cite{Chan2017_PnP_fixed_point} that the iterates generated by (\ref{eq:x_update})--(\ref{eq:u_update}) converge to a fixed point if a couple of assumptions are met. The first concerns the data-fidelity term.
\begin{assumption}
\label{asm:bounded_grad}
The function $f$ is differentiable and there exists $M > 0$ such that $ \lVert \nabla \! f(\x) \rVert \leq M \sqrt{d}$ for all $\x$.
\end{assumption}

\noindent The second assumption concerns the denoiser.
\begin{assumption}
\label{asm:bounded_den}
There exists $K > 0$ such that, for all $\x$,
\begin{equation}
\label{eq:bounded_den}
\lVert \D_{\sigma}(\x) - \x \rVert^2 \leq K d \sigma^2.
\end{equation}
\end{assumption}
While discussions on the above assumptions can be found in \cite{Chan2017_PnP_fixed_point}, here we reiterate a couple of remarks about Assumption \ref{asm:bounded_den}. It is difficult to mathematically verify \eqref{eq:bounded_den} even for simple denoisers, let alone sophisticated ones such as BM3D. However, an implication of \eqref{eq:bounded_den} is that the denoiser acts like an identity map (idle filter) when  $\sigma$ is close to zero. It is reasonable to expect that any practical denoiser obeys this weaker condition. Moreover, while the denoiser might not  perfectly behave as an identity operator when $\sigma$ is close to zero, it is possible to artificially force this behavior.

We are now ready to state the convergence result in \cite{Chan2017_PnP_fixed_point}.
\begin{theorem}
\label{thm:convergence}
Under Assumptions \ref{asm:bounded_grad} and \ref{asm:bounded_den}, the iterates $\{\TH_k = (\x_k,\v_k,\u_k): k \geq 1\}$ converge to a fixed point. 
\end{theorem}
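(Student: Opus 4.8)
\emph{Proof strategy.} The plan is to show that the residuals $(\Delta_k)$ are summable under \emph{all} possible behaviors of the rule \eqref{eq:rho_update}, from which $(\TH_k)$ is Cauchy, hence convergent. First I would record two per-iteration estimates. The optimality condition for \eqref{eq:x_update} reads $\nabla\!f(\x_{k+1}) + \rho_k(\x_{k+1}-\v_k+\u_k) = \boldsymbol{0}$, so Assumption \ref{asm:bounded_grad} gives $\lVert \x_{k+1}-\v_k+\u_k\rVert \le M\sqrt{d}\,\rho_k^{-1}$; and since \eqref{eq:v_update}--\eqref{eq:u_update} yield $\u_{k+1} = (\x_{k+1}+\u_k)-\D_{\sigma_k}(\x_{k+1}+\u_k)$ with $\sigma_k^2=\lambda/\rho_k$, Assumption \ref{asm:bounded_den} gives $\lVert\u_{k+1}\rVert \le \sqrt{K\lambda d}\,\rho_k^{-1/2}$. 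Since \eqref{eq:rho_update} makes $(\rho_k)$ nondecreasing, both bounds are dominated by $\rho_k^{-1/2}$ up to constants.

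Next I would turn these into a \emph{universal} bound on the residual. Eliminating $\x$ via $\x_{k+1} = \v_k - \u_k - \rho_k^{-1}\nabla\!f(\x_{k+1})$ (and the same identity at step $k-1$), using $\v_{k+1} = \x_{k+1}+\u_k-\u_{k+1}$, and inserting the estimates above, each of $\lVert\x_{k+1}-\x_k\rVert$, $\lVert\v_{k+1}-\v_k\rVert$, $\lVert\u_{k+1}-\u_k\rVert$ is bounded by a constant multiple of $\rho_k^{-1/2}+\rho_{k-1}^{-1/2}+\rho_{k-2}^{-1/2}$, and hence --- again by monotonicity of $(\rho_k)$ --- $\Delta_{k+1} \le C\,\rho_{k-2}^{-1/2}$ for a constant $C$ depending only on $M,K,\lambda,d,\rho_0$. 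This holds no matter which of $C_1, C_2$ is triggered.

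The case analysis of \cite{Chan2017_PnP_fixed_point} now enters through \eqref{eq:rho_update}. If $C_2$ holds for all large $k$, then $\Delta_{k+1}<\eta\Delta_k$ eventually and $(\Delta_k)$ is geometric; if $C_1$ holds for all large $k$, then $(\rho_k)$ grows geometrically and the universal bound makes $(\Delta_k)$ geometric; in both cases $\sum_k\Delta_k<\infty$, and these are precisely the two cases settled in \cite{Chan2017_PnP_fixed_point}. In the remaining case --- $C_1$ and $C_2$ each occurring infinitely often --- I would partition the indices into maximal runs (``epochs'') on which $\rho_k$ is constant. Consecutive epochs differ in $\rho$ exactly by the factor $\gamma$, so there are infinitely many of them and the $i$-th has $\rho$-value $\gamma^i\rho_0$. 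Inside an epoch every step triggers $C_2$, so $\Delta_k$ decays geometrically with ratio $\eta$ from its value $\Delta_{a_i}$ at the epoch's first index $a_i$; thus the contribution of epoch $i$ to $\sum_k\Delta_k$ is at most $\Delta_{a_i}/(1-\eta)$. The universal bound gives $\Delta_{a_i}\le C\,\rho_{a_i-3}^{-1/2}$, and since $a_i-3$ lies in epoch $i-3$ at worst, $\rho_{a_i-3}\ge\gamma^{\,i-3}\rho_0$; summing over epochs, $\sum_k\Delta_k \le \text{const}\cdot\sum_i\gamma^{-i/2}<\infty$. This is the promised \emph{summable piecewise-geometric} bound, genuinely distinct from the first two cases since here neither $(\Delta_k)$ nor $(\rho_k^{-1/2})$ need be geometric.

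With $\sum_k\Delta_k<\infty$ established in every case, the partial sums of $\lVert\TH_{k+1}-\TH_k\rVert$ converge, so $(\TH_k)$ is Cauchy in $\mathbb{R}^{3d}$ and converges to some $\TH^\star$; that $\TH^\star$ is a fixed point follows as in \cite{Chan2017_PnP_fixed_point} (when $\rho_k$ is eventually constant it is the fixed point of the corresponding PnP-ADMM map, and when $\rho_k\to\infty$ the estimates of the first paragraph force $\u^\star=\boldsymbol{0}$ and $\x^\star=\v^\star$). I expect the crux to be the mixed case in the third paragraph: the naive geometric estimate fails because $\rho_k$ can stall on arbitrarily long $C_2$-runs, so $\rho_k^{-1/2}$ does not decay there; one must instead glue the within-epoch $\eta$-decay of $\Delta_k$ to the across-epoch $\gamma$-growth of $\rho_k$, and in particular bound the ``restart value'' $\Delta_{a_i}$ using $\rho$ from a few steps earlier --- with care for short epochs and the first few iterates --- which is exactly the point left open in \cite{Chan2017_PnP_fixed_point}.
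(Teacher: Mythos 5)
Your proposal is correct, and it reaches the conclusion by the same overall strategy as the paper---show that the residuals $(\Delta_k)$ are summable in every case, so that $(\TH_k)$ is Cauchy and hence convergent---but the technical route differs in two respects. First, instead of invoking the conditional estimate of Lemma \ref{lem:delta_bound} (which the paper takes from Chan et al.\ and which is stated only for iterations where $C_1$ holds), you derive an \emph{unconditional} bound $\Delta_{k+1}\le C\rho_{k-2}^{-1/2}$ directly from the optimality condition of \eqref{eq:x_update}, Assumptions \ref{asm:bounded_grad}--\ref{asm:bounded_den}, and the monotonicity of $(\rho_k)$; this is a valid strengthening, though your elimination step should be arranged so as not to re-introduce $\v_k-\v_{k-1}$---for instance via the identities $\v_{k+1}-\v_k=-\u_{k+1}-\rho_k^{-1}\nabla \! f(\x_{k+1})$ and $\x_{k+1}-\x_k=\u_{k-1}-2\u_k-\rho_k^{-1}\nabla \! f(\x_{k+1})$, combined with $\lVert\u_{k+1}\rVert\le\sqrt{K\lambda d}\,\rho_k^{-1/2}$. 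Second, for case $S_3$ you partition the iterations into maximal constant-$\rho$ epochs and bound each epoch's contribution by $\Delta_{a_i}/(1-\eta)$ using the within-epoch $\eta$-decay, with the restart values $\Delta_{a_i}$ decaying geometrically because $\rho$ gains a factor $\gamma$ per epoch; the paper instead organizes the same alternation into $C_1$-runs and $C_2$-runs, bounds $(\Delta_k)$ by an explicit piecewise geometric sequence (Definition \ref{def:PGS}, Lemma \ref{lem:PGS_upper_bound}), and proves its summability as a separate lemma (Lemma \ref{lem:PGS_conv}). The two bookkeepings are equivalent in substance---your per-epoch sums are exactly the chunk sums of the paper's PGS---but your unconditional residual bound spares you from tracking where the $C_2$-runs begin, at the cost of re-deriving Chan et al.'s estimate, whereas the paper reuses their published lemma verbatim, which is why its PGS peaks are placed immediately after $C_1$ iterations. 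Your caveats about the first few iterates and short epochs are precisely where constants need care and do not affect summability; the closing remarks characterizing the limit are extra, since fixed-point convergence in Theorem \ref{thm:convergence} means convergence of the iterates $(\TH_k)$.
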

In particular, the iterates do not diverge or oscillate. 
We note that convergence of $(\TH_k)$ implies that $\Delta_k \to 0$ as $k \to \infty$.
However, the converse is generally not true, i.e., it is possible that $\Delta_k$ converges to $0$ but $(\TH_k)$ do not converge. The technical point is that $\Delta_k$ must vanish sufficiently fast to guarantee the convergence of $(\TH_k)$. This is used in \cite{Chan2017_PnP_fixed_point}  as well as the present analysis.

To set up the technical context, we briefly recall the arguments provided in \cite{Chan2017_PnP_fixed_point} in support of  Theorem \ref{thm:convergence}. First, Assumptions \ref{asm:bounded_grad} and \ref{asm:bounded_den} were used to obtain the following result; see \cite[Appendix B, Lemma 1]{Chan2017_PnP_fixed_point}.
\begin{lemma}
\label{lem:delta_bound}
If condition $C_1$ in \eqref{eq:rho_update} holds at iteration $k$, then $\Delta_{k+1} \leq c/\sqrt{\rho_k}$ for some $c >0$.
\end{lemma}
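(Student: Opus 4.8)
The plan is to control $\sqrt{d}\,\Delta_{k+1} = \lVert \x_{k+1}-\x_k\rVert + \lVert \v_{k+1}-\v_k\rVert + \lVert \u_{k+1}-\u_k\rVert$ term by term, showing each contribution is $O(1/\sqrt{\rho_k})$. Two elementary observations do most of the work. First, the first-order optimality condition for \eqref{eq:x_update} reads $\nabla \! f(\x_{k+1}) + \rho_k(\x_{k+1}-\v_k+\u_k) = \boldsymbol{0}$, so by Assumption \ref{asm:bounded_grad} we have $\x_{k+1} = \v_k - \u_k - \rho_k^{-1}\nabla \! f(\x_{k+1})$ with the last term of norm at most $M\sqrt{d}/\rho_k$. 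Second, substituting \eqref{eq:v_update} into \eqref{eq:u_update} gives the identity
\begin{equation*}
\u_{k+1} = -\bigl(\D_{\sigma_k}(\x_{k+1}+\u_k) - (\x_{k+1}+\u_k)\bigr),
\end{equation*}
so Assumption \ref{asm:bounded_den} bounds the \emph{dual variable itself}: $\lVert \u_{j}\rVert \le \sqrt{K d}\,\sigma_{j-1} = \sqrt{K d\lambda/\rho_{j-1}}$ for every $j\ge 1$.

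Next I would use \eqref{eq:u_update} in the form $\v_k = \u_{k-1}+\x_k-\u_k$ to eliminate $\v_k$ from the relation for $\x_{k+1}$ above, which gives $\x_{k+1}-\x_k = \u_{k-1}-2\u_k - \rho_k^{-1}\nabla \! f(\x_{k+1})$. The same substitution in $\v_{k+1}-\v_k = (\u_k-\u_{k-1}) + (\x_{k+1}-\x_k) + (\u_k-\u_{k+1})$ collapses it to $\v_{k+1}-\v_k = -\u_{k+1} - \rho_k^{-1}\nabla \! f(\x_{k+1})$. The key point is that the iterates $\x_k,\v_k$ themselves, for which we have no \emph{a priori} bound, cancel, leaving only gradients of $f$ and dual variables at indices $k-1,k,k+1$, all of which are controlled by the two displays above. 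Taking norms, using $\lVert\u_{k-1}\rVert\le\sqrt{Kd\lambda/\rho_{k-2}}$, $\lVert\u_k\rVert\le\sqrt{Kd\lambda/\rho_{k-1}}$, $\lVert\u_{k+1}\rVert\le\sqrt{Kd\lambda/\rho_k}$, $\lVert\u_{k+1}-\u_k\rVert\le\lVert\u_{k+1}\rVert+\lVert\u_k\rVert$, and $\lVert\nabla \! f\rVert\le M\sqrt d$, and summing the three contributions gives
\begin{equation*}
\Delta_{k+1} \;\le\; \sqrt{K\lambda}\left(\frac{1}{\sqrt{\rho_{k-2}}} + \frac{3}{\sqrt{\rho_{k-1}}} + \frac{2}{\sqrt{\rho_k}}\right) + \frac{2M}{\rho_k}.
\end{equation*}

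It then remains to fold the $\rho_{k-1},\rho_{k-2}$ and $\rho_k^{-1}$ terms into $\rho_k^{-1/2}$. By \eqref{eq:rho_update}, $(\rho_k)$ is non-decreasing and $\rho_j \le \gamma\rho_{j-1}$ at every step (whichever of $C_1,C_2$ occurs), so $\rho_{k-1}\ge\rho_k/\gamma$ and $\rho_{k-2}\ge\rho_k/\gamma^2$; moreover $\rho_k\ge\rho_0$ gives $1/\rho_k \le \rho_0^{-1/2}\rho_k^{-1/2}$. Substituting yields $\Delta_{k+1}\le c/\sqrt{\rho_k}$ with $c = \sqrt{K\lambda}\,(\gamma + 3\sqrt{\gamma} + 2) + 2M/\sqrt{\rho_0}$, valid for all $k\ge 2$. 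For $k=1$ the same computation goes through except that $\u_{k-1}=\u_0$ is the fixed but arbitrary initialization rather than a denoiser residual; since this is a single extra iteration, the bound persists after enlarging $c$ to absorb the term $\sqrt{\rho_1}\,\lVert\u_0\rVert/\sqrt{d}$.

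I do not anticipate a serious obstacle here; the essential step is the identity $\u_{k+1} = -(\D_{\sigma_k}(\x_{k+1}+\u_k)-(\x_{k+1}+\u_k))$, which is what makes Assumption \ref{asm:bounded_den} applicable at all, followed by the bookkeeping that cancels the unbounded iterates so that only gradient and dual terms remain. The only place needing a little care is the final conversion, which relies solely on the monotonicity and $\gamma$-boundedness of $(\rho_k)$; in particular, the hypothesis that $C_1$ holds at iteration $k$ is not essential to the bound itself, but merely fixes the regime in which the lemma is subsequently used.
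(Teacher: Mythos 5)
Your proof is correct. Note that this letter does not actually prove Lemma \ref{lem:delta_bound}; it is imported verbatim from Chan et al.\ (Appendix B, Lemma 1 of the cited paper), and your argument reproduces the essence of that proof: the stationarity condition of \eqref{eq:x_update} together with Assumption \ref{asm:bounded_grad} controls the $\nabla f$ term by $M\sqrt{d}/\rho_k$, the identity $\u_{k+1}=-\bigl(\D_{\sigma_k}(\x_{k+1}+\u_k)-(\x_{k+1}+\u_k)\bigr)$ together with Assumption \ref{asm:bounded_den} bounds every dual iterate by $\sqrt{Kd\lambda/\rho_{k-1}}$, and the monotonicity of $(\rho_k)$ with $\rho_k\le\gamma\rho_{k-1}$ folds the lagged terms into $\rho_k^{-1/2}$. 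Your bookkeeping (eliminating $\v_k$ so that the unbounded primal iterates cancel, leaving only gradient and dual terms) is a slightly cleaner route than chaining triangle inequalities through $\|\x_{k+1}-\v_k\|$ and $\|\x_k-\v_k\|=\|\u_k-\u_{k-1}\|$, and your closing observation is also accurate: the hypothesis that $C_1$ holds at iteration $k$ plays no role in the bound itself, which holds for all $k$ under Assumptions \ref{asm:bounded_grad} and \ref{asm:bounded_den}; the condition only matters for how the lemma is deployed in the convergence argument.
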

\noindent Now, note that exactly one of the following cases must hold:
\begin{description}
\item[$(S_1)$] Condition $C_2$ holds for finitely many $k$.
\item[$(S_2)$] Condition $C_1$ holds for finitely many $k$.
\item[$(S_3)$] Both $C_1$ and $C_2$ hold for infinitely many $k$.
\end{description}
In \cite{Chan2017_PnP_fixed_point}, convergence was established for $S_1$ and $S_2$ as follows.
Suppose $S_1$ is true, and let $n_1$ be the largest $k$ when $C_2$ holds, i.e., $C_1$ is true for $k \geq n_1+1$.
Then it follows from \eqref{eq:rho_update} that $\rho_k$ increases monotonically: $\rho_k = \gamma^{k-n_1} \rho_{n_1}$  for $k \geq n_1$.
Using Lemma \ref{lem:delta_bound}, we can thus conclude that
\begin{equation}
\label{ineq1}
\Delta_{k+1} \leq \frac{c}{\sqrt{\rho_{n_1}} \sqrt{\gamma}^{k-n_1}} \qquad (k \geq n_1).
\end{equation}
Similarly, for $S_2$, let $n_2-1$ be the largest $k$ when $C_1$ holds, so that $C_2$ holds for $k \geq n_2$. By recursively applying the condition in \eqref{eq:rho_update}, we then obtain
\begin{align}
\label{ineq2}
\Delta_{k+1} &< \eta^{k+1-n_2} \Delta_{n_2} \\
& \leq \frac{c}{\sqrt{\rho_{n_2-1}}} \eta^{k+1-n_2} \qquad (k \geq n_2),
\end{align}
where the second inequality follows from Lemma \ref{lem:delta_bound}.
In summary, for both $S_1$ and $S_2$, we can find a sufficiently large $n$ and $A > 0$ such that 
\begin{equation}
\label{eq:geom_upper_bound}
\Delta_{k+1} \leq A \beta^{k} \qquad (k \geq n),
\end{equation}
where $0 < \beta < 1$. Namely, the error between successive iterates is eventually upper-bounded by a decaying geometric sequence. Using the triangle inequality, the fast convergence of $(\Delta_k)$ can be used to show that the original sequence $(\TH_k)$ is Cauchy, and hence convergent (since the ambient space is complete). This establishes the convergence of $(\TH_k)$ for the first two cases.

It was stated in \cite{Chan2017_PnP_fixed_point} that $S_3$ is a ``union of $S_1$ and $S_2$'', and that convergence under $S_1$ and $S_2$ implies convergence for $S_3$.
However, this is not true simply because the proof sketched above is valid only if one of $C_1$ or $C_2$ occurs finitely many times---this naturally excludes the case where \textit{both} $C_1$ and $C_2$ occur infinitely often.
For example, consider the hypothetical situation in which $C_1$ occurs for every even $k$ and $C_2$ occurs for every odd $k$.
Clearly, the proof does not work in this case.

For further clarity, let us carefully examine the technique in  \cite{Chan2017_PnP_fixed_point} used to establish convergence for $S_1$ and $S_2$.
For $S_1$, the eventual bound on $(\Delta_k)$ was established using the fact that $\rho_k$ is monotonically increasing for $k \geq n_1$.
A similar bound for $S_2$ was derived using the second inequality in \eqref{eq:rho_update}, which holds for $k \geq n_2$.
Thus, in both \eqref{ineq1} and \eqref{ineq2}, the existence of a finite $n_1$ (or $n_2$) is vital because it allows us to ignore the first few terms of the sequence $(\Delta_k)$, and understand its behavior over the tail. In turn, this is possible because condition $C_1$ (or $C_2$) occurs only a finite number of times.
If both $C_1$ and $C_2$ occur infinitely often, we cannot find a finite $n$ beyond which a single inequality holds for $\Delta_k$.
This is precisely why the technique in \cite{Chan2017_PnP_fixed_point} is not applicable for $S_3$.
Before proceeding further, we note why it is important to prove convergence in the case $S_3$.
Theorem \ref{thm:convergence} assures us that the algorithm converges regardless of which of the three cases hold. 
Therefore its proof remains incomplete unless convergence is proved for all three cases (and in particular $S_3$).
Moreover, experiments suggest that $S_3$ is indeed likely to arise in certain practical scenarios.
We have reported some empirical observations for deblurring and superresolution experiments in the supplementary material to back this. In these experiments, we found that when $\eta$ is close to $1$, it is likely that $S_3$ holds, i.e., the algorithm keeps switching between conditions $C_1$ and $C_2$.

\section{Main Result}
\label{sec:results}

We will now establish the  convergence of $(\TH_k)$ for case $S_3$.
In particular, we will show that $(\Delta_k)$ can be bounded by a sequence which vanishes sufficiently fast to ensure that $(\TH_k)$ is Cauchy. Such a sequence is defined next.
\begin{definition}
\label{def:PGS}
A positive sequence $(y_k)_{k \geq 1}$ is said to be a piecewise geometric sequence (PGS) if there exists  $0 < \beta <1$ and indices $n_1 < n_2 < \cdots$ such that 
\begin{itemize}
\item  for $j \geq 1$, the terms $y_{n_j+1},\ldots,y_{n_{j+1}}$ are in geometric progression with rate $\beta$, i.e., for $k=n_j+1,\ldots,n_{j+1}$,
\begin{equation*}
y_k = y_{n_j+1} \beta^{k-n_j-1}.
\end{equation*}
\item the subsequence $(y_{n_j+1})_{j \geq 1}$ is in geometric progression with rate $\beta$, i.e., for $j \geq 2$, $$y_{n_j+1} = y_{n_1+1} \beta^{j-1}.$$
\end{itemize}
\end{definition}
In other words, a PGS can be divided into chunks that are in geometric progression (with identical rates).
Moreover, the subsequence consisting of the peaks (i.e., the first term in each chunk) is itself in geometric progression. A PGS has a sawtooth-like appearance (see Figure \ref{fig:scan}), and is slower to decay to zero compared to a geometric sequence having the same rate. It turns out that the sequence of residues can be bounded by a PGS for case $S_3$.
\begin{lemma}
\label{lem:PGS_upper_bound}
Let $(\Delta_k)_{k \geq 1}$ be the residuals for case $S_3$. Then there exists a PGS $(y_k)_{k \geq 1}$ such that $\Delta_k \leq y_k$ for all $k$.
\end{lemma}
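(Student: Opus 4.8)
The plan is to exploit the combinatorial structure of the rule \eqref{eq:rho_update} in case $S_3$ and track the penalty parameter. Let $a_1 < a_2 < \cdots$ enumerate the infinitely many iterations at which condition $C_1$ holds. Since exactly one of $C_1$, $C_2$ holds at every iteration, every iteration strictly between $a_j$ and $a_{j+1}$ — and every iteration in $1,\dots,a_1-1$ — is governed by $C_2$. The one genuinely substantive observation is that $\rho$ increases by exactly a single factor $\gamma$ between two successive $C_1$-iterations, regardless of how long the intervening $C_2$-run is: by \eqref{eq:rho_update}, $\rho_{a_j+1}=\gamma\rho_{a_j}$, and then $\rho$ stays put throughout the $C_2$-run $a_j+1,\dots,a_{j+1}-1$, so $\rho_{a_{j+1}}=\rho_{a_j+1}=\gamma\rho_{a_j}$ and hence $\rho_{a_j}=\gamma^{\,j-1}\rho_{a_1}$.

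Next I would extract two elementary geometric bounds. Applying Lemma \ref{lem:delta_bound} at each $C_1$-iteration $a_j$ yields the peak bound $\Delta_{a_j+1}\le c/\sqrt{\rho_{a_j}}=(c/\sqrt{\rho_{a_1}})\,\gamma^{-(j-1)/2}$, so the peaks decay geometrically at rate $1/\sqrt{\gamma}$. On the other hand, iterating the strict inequality in condition $C_2$ along the run $a_j+1,\dots,a_{j+1}$ gives $\Delta_k\le\Delta_{a_j+1}\,\eta^{\,k-a_j-1}$, and the same argument on $1,\dots,a_1$ gives $\Delta_k\le\Delta_1\,\eta^{\,k-1}$. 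Thus, inside every block the residuals decay at rate $\eta$, while from one block's peak to the next they decay at rate $1/\sqrt{\gamma}$.

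It then remains to splice these two behaviours into a single PGS in the sense of Definition \ref{def:PGS}. Put $\beta:=\max\{\eta,\,1/\sqrt{\gamma}\}\in(0,1)$, take the breakpoints $n_1:=0$ and $n_{j+1}:=a_j$ for $j\ge1$, and let $(y_k)_{k\ge1}$ be the PGS with rate $\beta$, these breakpoints, and leading term $y_1:=\max\{\Delta_1,\;c/(\beta\sqrt{\rho_{a_1}})\}$; this data determines $(y_k)$ uniquely, with $y_{a_j+1}=y_1\beta^{\,j}$ at the peaks and $y_k=y_{a_j+1}\beta^{\,k-a_j-1}$ on the block $a_j+1,\dots,a_{j+1}$. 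Checking $\Delta_k\le y_k$ is then routine, in three parts: on $1,\dots,a_1$ we have $\Delta_k\le\Delta_1\eta^{\,k-1}\le y_1\beta^{\,k-1}=y_k$ since $\eta\le\beta$; at the peaks, $\Delta_{a_j+1}\le(c/\sqrt{\rho_{a_1}})\,\gamma^{-(j-1)/2}\le(c/\sqrt{\rho_{a_1}})\,\beta^{\,j-1}\le y_1\beta^{\,j}=y_{a_j+1}$; and on each block, $\Delta_k\le\Delta_{a_j+1}\eta^{\,k-a_j-1}\le y_{a_j+1}\beta^{\,k-a_j-1}=y_k$.

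I expect the difficulty here to be organizational rather than deep. The two relevant rates $\eta$ and $1/\sqrt{\gamma}$ are in general distinct, which forces us to use their maximum as the single PGS rate; one must then check that coarsening the peak rate from $1/\sqrt{\gamma}$ up to $\beta$ is harmless for an upper bound while still leaving the peak subsequence a genuine geometric progression, and one must pad the leading term $y_1$ enough to dominate simultaneously the pre-$a_1$ segment and the first true peak $\Delta_{a_1+1}$. The crux that makes everything line up — and the step I would highlight — is the observation in the first paragraph that $\rho$ advances by exactly one power of $\gamma$ per $C_1$-iteration, which is precisely what converts Lemma \ref{lem:delta_bound} into a clean geometric bound on the peaks.
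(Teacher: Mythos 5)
Your proposal is correct and follows essentially the same route as the paper's proof: Lemma \ref{lem:delta_bound} applied at the $C_1$-iterations gives geometrically decaying peaks (because $\rho$ gains exactly one factor $\gamma$ per $C_1$-iteration), the $C_2$ inequality gives rate-$\eta$ decay along each $C_2$-run, and $\beta=\max\{\eta,1/\sqrt{\gamma}\}$ splices the two into a PGS dominating $(\Delta_k)$. The only difference is bookkeeping: you place a breakpoint at every $C_1$-iteration and track $\rho_{a_j}=\gamma^{j-1}\rho_{a_1}$ exactly, whereas the paper chunks by maximal $C_1$-runs followed by $C_2$-runs and propagates the peak constants via $L_{j+1}\leq L_j/\sqrt{\gamma}$ (and simply sets the first few $y_k$ equal to $\Delta_k$ rather than padding $y_1$); this is an immaterial reorganization of the same argument.
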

This may be considered as an analogue of \eqref{eq:geom_upper_bound} for $S_3$.
To deduce that $(\TH_k)$ is Cauchy, it suffices to show that a PGS is summable.
\begin{figure}[t!]
\centering
\includegraphics[width=0.8\linewidth,keepaspectratio]{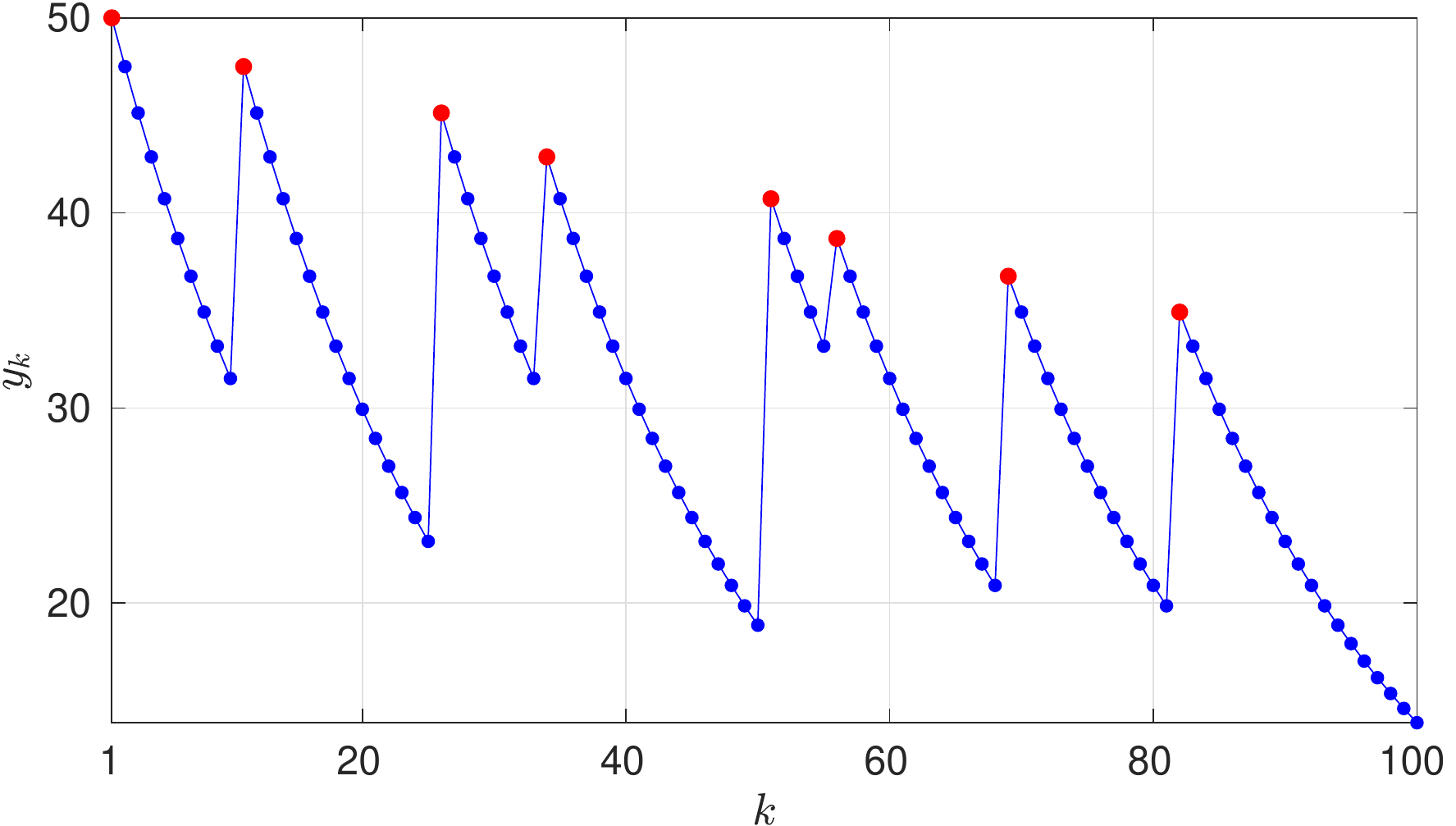}
\caption{A piecewise geometric sequence $(y_k)$. The red points are  the peaks, which themselves form a geometric sequence (cf. Definition \ref{def:PGS}).}
\label{fig:scan}
\end{figure}
\begin{lemma}
\label{lem:PGS_conv}
If $(y_k)_{k \geq 1}$ is a PGS, then $\sum_{k=1}^\infty y_k$ converges.
\end{lemma}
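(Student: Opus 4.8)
The plan is to split $\sum_{k\geq 1} y_k$ into a harmless finite initial block and a tail made up of the geometric chunks, and then bound each chunk by a convergent geometric series. First I would note that every index $k>n_1$ belongs to exactly one chunk $\{n_j+1,\ldots,n_{j+1}\}$, while $y_1,\ldots,y_{n_1}$ is a finite (possibly empty) block contributing a finite amount; hence it suffices to show that $\sum_{j\geq 1}\sum_{k=n_j+1}^{n_{j+1}} y_k$ converges. Since all terms are nonnegative, rearranging the series into these chunks and interchanging the finite inner sums with the outer sum is legitimate.

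Next, for a fixed $j$, the first property in Definition \ref{def:PGS} gives $\sum_{k=n_j+1}^{n_{j+1}} y_k = y_{n_j+1}\sum_{i=0}^{n_{j+1}-n_j-1}\beta^{i}\leq y_{n_j+1}/(1-\beta)$, where the bound is \emph{uniform in $j$}: a partial sum of a geometric progression with ratio $\beta\in(0,1)$ never exceeds $1/(1-\beta)$, no matter how many terms it has. I would then invoke the second property in Definition \ref{def:PGS}, namely that the peaks satisfy $y_{n_j+1}=y_{n_1+1}\beta^{j-1}$, so that $\sum_{j\geq 1} y_{n_j+1}=y_{n_1+1}/(1-\beta)<\infty$. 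Combining these two estimates yields $\sum_{j\geq 1}\sum_{k=n_j+1}^{n_{j+1}} y_k\leq y_{n_1+1}/(1-\beta)^2$, and adding back the finite initial block gives $\sum_{k\geq 1} y_k<\infty$.

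I do not expect a genuine obstacle here — the argument is essentially a double application of the geometric-series bound. The one point that must be handled with care is precisely the uniform bound highlighted above: the chunk lengths $n_{j+1}-n_j$ may be unbounded, so one cannot estimate a chunk sum by (chunk length)$\times$(peak value); instead one must use that a geometric progression with ratio strictly less than one has partial sums bounded independently of the number of terms. Once that is observed, the remainder is bookkeeping.
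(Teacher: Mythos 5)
Your proof is correct and follows essentially the same route as the paper: both decompose the series into the chunks $\lb n_j+1,n_{j+1}\rb$, bound each chunk sum by $y_{n_j+1}/(1-\beta)$ uniformly in the chunk length, and then exploit the geometric decay of the peaks to sum over $j$. The only cosmetic difference is that you conclude by direct comparison using nonnegativity of the terms, whereas the paper phrases the same estimates through the Cauchy criterion with an explicit tail index $N=n_K+1$.
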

The proof of Lemma \ref{lem:PGS_upper_bound} and \ref{lem:PGS_conv} is somewhat technical and is deferred to Section \ref{sec:proofs}. Importantly, using the above lemmas, we can establish the convergence of $(\TH_k)$ for case $S_3$.
\begin{proposition}
\label{prop:convergence_S3}
The iterates $(\TH_k)_{k \geq 1}$ for case $S_3$ converge to a fixed point.
\end{proposition}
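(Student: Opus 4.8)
The plan is to derive Proposition~\ref{prop:convergence_S3} as a short consequence of Lemmas~\ref{lem:PGS_upper_bound} and~\ref{lem:PGS_conv}: the former dominates $(\Delta_k)$ by a PGS in case $S_3$, and the latter guarantees that a PGS is summable, so that $\sum_k \Delta_k < \infty$. From this I would show that $(\TH_k)$ is a Cauchy sequence in the metric $D$, conclude convergence by completeness of the ambient space, and finally identify the limit as a fixed point by passing to the limit in the update equations.

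In more detail: fix the PGS $(y_k)$ supplied by Lemma~\ref{lem:PGS_upper_bound}, so that $\Delta_k \le y_k$ for all $k$; by Lemma~\ref{lem:PGS_conv} the series $\sum_k y_k$ converges, hence so does $\sum_k \Delta_k$. Recalling that $\Delta_{k+1} = D(\TH_k,\TH_{k+1})$, the triangle inequality for $D$ gives, for any $p < q$,
\[
D(\TH_p,\TH_q) \;\le\; \sum_{k=p}^{q-1} D(\TH_k,\TH_{k+1}) \;=\; \sum_{k=p+1}^{q} \Delta_k \;\le\; \sum_{k=p+1}^{\infty} \Delta_k,
\]
and the right-hand side tends to $0$ as $p \to \infty$, being the tail of a convergent series. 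Hence $(\TH_k)$ is Cauchy in $D$; since $D$ is, up to the factor $1/\sqrt{d}$, the sum of the Euclidean norms of the three blocks, the space $(\mathbb{R}^d)^3$ is complete under $D$, so $(\TH_k)$ converges to some $\TH^\star = (\x^\star,\v^\star,\u^\star)$, equivalently each of $(\x_k),(\v_k),(\u_k)$ converges.

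It then remains to check that $\TH^\star$ is a fixed point, and here the argument parallels the one used for case $S_1$ in \cite{Chan2017_PnP_fixed_point}: since $C_1$ holds infinitely often in $S_3$ and $\rho_k$ is nondecreasing with $\rho_{k+1}=\gamma\rho_k$ under $C_1$, we have $\rho_k \to \infty$ and $\sigma_k \to 0$; from \eqref{eq:v_update} and Assumption~\ref{asm:bounded_den}, $\lVert \v_{k+1} - (\x_{k+1}+\u_k)\rVert^2 \le Kd\,\sigma_k^2 \to 0$, while from the optimality condition for \eqref{eq:x_update} together with Assumption~\ref{asm:bounded_grad}, $\lVert \x_{k+1}-\v_k+\u_k\rVert \le M\sqrt{d}/\rho_k \to 0$; passing to the limit in these two relations and in \eqref{eq:u_update} shows that $\TH^\star$ satisfies the defining fixed-point equations. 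I expect the only genuinely delicate parts of the overall argument to lie in the two deferred lemmas — in particular, constructing the dominating PGS in Lemma~\ref{lem:PGS_upper_bound} from the interleaved occurrences of $C_1$ and $C_2$, and bounding the slow, sawtooth-like decay in Lemma~\ref{lem:PGS_conv}; the proof of the proposition itself is then just the Cauchy-sequence assembly above.
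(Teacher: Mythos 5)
Your proposal is correct and follows essentially the same route as the paper's own proof: dominate $(\Delta_k)$ by the PGS from Lemma~\ref{lem:PGS_upper_bound}, invoke Lemma~\ref{lem:PGS_conv} for summability, and use the triangle inequality for $D$ to conclude that $(\TH_k)$ is Cauchy and hence convergent by completeness. Your additional paragraph identifying the limit as a fixed point (via $\rho_k \to \infty$ and Assumptions~\ref{asm:bounded_grad} and~\ref{asm:bounded_den}) goes slightly beyond what the paper writes out for this proposition, but it is consistent with the original argument in \cite{Chan2017_PnP_fixed_point} and does not change the substance.
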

\begin{proof}
As noted earlier, all we need to show is that $(\TH_k)$ is a Cauchy sequence. That is, for any given $\epsilon >0$, we can find 
an integer $N$ such that $D(\TH_n,\TH_m) < \epsilon$ whenever $m > n \geq N$. Now, from the triangle inequality for metric $D$ and \eqref{Dk}, we have
\begin{align*}
D(\TH_n,\TH_m) & \leq D(\TH_n,\TH_{n+1}) + \cdots + D(\TH_{m-1},\TH_m)\\
&= \Delta_{n+1} + \cdots + \Delta_m.
\end{align*}
From Lemma \ref{lem:PGS_upper_bound} and \ref{lem:PGS_conv}, we can conclude that $ \Delta_1+\Delta_2+\cdots$ converges. This is because $(\Delta_k)$ is bounded by the PGS $(y_k)$, whose series itself converges. 
In particular, the partial sums of $(\Delta_k)$ form a Cauchy sequence. As a result, for any $\epsilon >0$, we can find 
a sufficiently large $N$ such that $\Delta_{n+1} + \cdots + \Delta_m < \epsilon$ when $m > n \geq N$. 
\end{proof}
Along with the proofs for cases $S_1$ and $S_2$ already provided in \cite{Chan2017_PnP_fixed_point}, Proposition \ref{prop:convergence_S3} completes the proof of Theorem \ref{thm:convergence}.

\section{Proofs}
\label{sec:proofs}
We now give the proofs for Lemma \ref{lem:PGS_upper_bound} and \ref{lem:PGS_conv}.
For integers $m > n$, we will use $\lb n,m \rb$ to denote the set $\{n,n+1,\ldots,m\}$.

\subsection{Proof of Lemma \ref{lem:PGS_upper_bound}}

Let $\alpha = 1/\sqrt{\gamma}$ and $\beta = \max \{\alpha,\eta\}$.
Note that $\beta \in (0,1)$.
We will show that $(\Delta_k)$ is bounded by a PGS with rate $\beta$.

Let $n_1$ be the iteration at which condition $C_1$ holds for the first time.
Further, let $m_1>n_1$ be the iteration at which condition $C_2$ occurs for the first time after $n_1$ (i.e. $C_1$ holds at iterations $n_1, n_1+1, \ldots,m_1-1$).
Let $n_2>m_1$ be the iteration at which $C_1$ holds for the first time after $m_2$, and so on.
Since $S_3$ holds, both $C_1$ and $C_2$ are true infinitely often. 
This gives us an infinite sequence of indices $n_1 < m_1 < n_2 < m_2 < \cdots$.
Now, by construction, for each $j \geq 1$, $C_1$ holds at iterations $k \in \lb n_j,m_j-1 \rb$. Hence, from \eqref{eq:rho_update}, $\rho_k = \gamma^{k - n_j} \rho_{n_j}$  for $k \in \lb n_{j+1},m_j \rb$.
Since this trivially also holds for $k=n_j$,
\begin{equation*}
\rho_k = \gamma^{k - n_j} \rho_{n_j}, \qquad k \in \lb n_j,m_j \rb.
\end{equation*}
By Lemma \ref{lem:delta_bound}, for $k \in \lb n_j+1,m_j \rb$, we have
\begin{equation*}
\Delta_k \leq c \rho_{k-1}^{-1/2} = c (\gamma^{k-n_j-1}\rho_{n_j})^{-1/2}.
\end{equation*}
Letting $L_j = c \rho_{n_j}^{-1/2}$, this becomes
\begin{equation}
\label{eq:bound_case1_old}
\Delta_k \leq L_j \alpha^{k-n_j-1}, \qquad k \in \lb n_j+1,m_j \rb.
\end{equation}

We now derive a relation between $L_j$'s for different $j$.
We know that $L_{j+1}/L_j = \sqrt{\rho_{n_j}/\rho_{n_{j+1}}}$.
However, from \eqref{eq:rho_update} we get,
\begin{equation*}
\rho_{n_{j+1}} = \rho_{n_{j+1}-1} = \cdots = \rho_{m_j} = \gamma^{m_j-n_j} \rho_{n_j},
\end{equation*}
since Case 2 occurs at iterations $m_j,\ldots,n_{j+1}-1$.
This gives
\begin{equation*}
L_{j+1} = L_j \alpha^{m_j-n_j} \leq L_j \alpha,
\end{equation*}
since $\alpha < 1$ and $m_j > n_j$.
Applying the above inequality recursively and using the fact $\alpha \leq \beta$, we get
\begin{equation*}
\label{eq:Lj}
L_j \leq L_1 \alpha^{j-1} \leq L_1 \beta^{j-1}.
\end{equation*}
Let $\bar{L}_j = L_1 \beta^{j-1}$.
Hence from \eqref{eq:bound_case1_old}, for $k \in \lb n_j+1,m_j \rb$,
\begin{equation}
\label{eq:bound_case1}
\Delta_k \leq \bar{L}_j \alpha^{k-n_j-1} \leq \bar{L}_j \beta^{k-n_j-1}.
\end{equation}
Now, condition $C_2$ holds for $k \in \lb m_j,n_{j+1}-1 \rb$.
Hence by recursively applying \eqref{eq:rho_update}, we obtain $\Delta_k \leq \eta^{k-m_j} \Delta_{m_j}$ for $k \in \lb m_j+1,n_{j+1} \rb$.
Note that this trivially also holds for $k=m_j$.
Hence, we have for $k \in \lb m_j,n_{j+1} \rb$,
\begin{align}
\Delta_k \leq \eta^{k-m_j} \Delta_{m_j} & \leq \beta^{k-m_j} \bar{L}_j \beta^{m_j-n_j-1} \nonumber \\
&= \bar{L}_j \beta^{k-n_j-1}, \label{eq:bound_case2}
\end{align}
where we have used \eqref{eq:bound_case1} with $k=m_j$ and the fact that $\eta \leq \beta$.

Combining \eqref{eq:bound_case1} and \eqref{eq:bound_case2} we get
\begin{equation*}
\Delta_k \leq \bar{L}_j \beta^{k-n_j-1}, \qquad k \in \lb n_j+1,n_{j+1} \rb.
\end{equation*}
In summary, we conclude that $(\Delta_k)$ is upper-bounded by the sequence $(y_k)$ defined by
\begin{equation*}
y_k = \bar{L}_j \beta^{k-n_j-1} = ( L_1 \beta^{j-1} ) \beta^{k-n_j-1},
\end{equation*}
for $j  \geq 1$ and $k \in \lb n_j+1,n_{j+1} \rb$. This does not specify the first $n_1$ terms of $(y_k)$; we may arbitrarily choose them to be equal to the corresponding terms of $(\Delta_k)$.
It follows from Definition \ref{def:PGS} that $(y_k)$ is indeed a PGS with rate $\beta$.

\subsection{Proof of Lemma \ref{lem:PGS_conv}}

Let the parameters $\beta$, $n_1 < n_2 < \cdots$ be as in Definition \ref{def:PGS}.
We will prove the convergence of $\sum_k y_k$ using the Cauchy criterion.
Let $\epsilon > 0$ be given.
We need to find an index $N$ such that 
\begin{equation}
\label{cauchy}
y_n +\cdots+y_m < \epsilon,  \qquad (m > n \geq N).
\end{equation}
Let $A = y_{n_1+1}$, and fix an integer $K > 0$ such that
\begin{equation}
\label{eq:bound2}
\beta^{K-1} < \frac{\epsilon (1-\beta)^2}{A}.
\end{equation}
This is possible since $\beta < 1$ and the right side of \eqref{eq:bound2} is positive.
We will prove that \eqref{cauchy} is satisfied by $N = n_K+1$.

First, for fixed $j \geq 1$, we derive a bound on the sum of the terms from $n_j+1$ to $n_{j+1}$. From Definition \ref{def:PGS}, we have
\begin{align}
y_{n_j+1} + \cdots + y_{n_{j+1}} &= y_{n_j+1} (1 + \beta + \cdots + \beta^{n_{j+1}-n_j-1}) \nonumber \\
&= y_{n_j+1} \cdot \frac{1 - \beta^{n_{j+1}-n_j}}{1-\beta} \nonumber \\
&< \frac{y_{n_j+1}}{1-\beta} = \frac{A \beta^{j-1}}{1-\beta}. \label{eq:bound1}
\end{align}
The inequality in the third step holds since $\beta^{n_{j+1}-n_j} < 1$, while the last equality follows from Definition \ref{def:PGS}.

We are now ready to establish \eqref{cauchy}.
Let $m,n$ be such that $m > n \geq n_K+1$.
Suppose $m$ lies in the chunk $\lb n_l+1,n_{l+1} \rb$ for some $l \geq K$.
Then $n_K+1 \leq n < m \leq n_{l+1}$.
As a result,
\begin{align*}
\sum_{k=n}^{m} y_k &\leq \sum_{k=n_K+1}^{n_{l+1}} y_k \\ 
&= \sum_{k=n_K+1}^{n_{K+1}} y_k + \cdots + \sum_{k=n_l+1}^{n_{l+1}} y_k \\
&< \frac{A}{1-\beta} \cdot \frac{\beta^{K-1}}{1-\beta}(1-\beta^{l-K+1}) \\
&< \frac{A \beta^{K-1}}{(1-\beta)^2} < \epsilon,
\end{align*}
where the inequality in the third step follows from \eqref{eq:bound1} and the last inequality follows from \eqref{eq:bound2}.
Therefore, $N = n_K+1$ satisfies the Cauchy criterion, where $K$ is defined by \eqref{eq:bound2}. This completes the proof.

\section{Conclusion}
\label{sec:conc}
We pointed out that the proof of convergence of the PnP-ADMM algorithm in \cite{Chan2017_PnP_fixed_point} does not address a certain case. We reasoned that this case needs to be handled differently from the cases addressed in \cite{Chan2017_PnP_fixed_point}. This is because the approach in \cite{Chan2017_PnP_fixed_point} fundamentally assumes that a certain condition holds a finite number of times, which is not true for the case in question.
In particular, we showed that unlike the geometric sequences used for the other cases, 
we need to work with a piecewise geometric sequence. Our proof of convergence follows from the observation that the residue between successive iterations is upper-bounded by this summable sequence.
Our analysis rigorously establishes the convergence theorem in \cite{Chan2017_PnP_fixed_point}.

We note that in practice, optimization algorithms, including PnP-ADMM, are terminated after a finite number of iterations. In particular, since the cases in the convergence analysis involve infinite number of iterations, which of these hold in practice cannot be ascertained empirically. Therefore, getting a guarantee on theoretical convergence has practical importance---it provides a mathematical justification to terminate the algorithm after a sufficiently many iterations.  This is precisely what was accomplished in this letter.

\bibliographystyle{IEEEtran}
\bibliography{citations}

\begin{thebibliography}{10}
\providecommand{\url}[1]{#1}
\csname url@samestyle\endcsname
\providecommand{\newblock}{\relax}
\providecommand{\bibinfo}[2]{#2}
\providecommand{\BIBentrySTDinterwordspacing}{\spaceskip=0pt\relax}
\providecommand{\BIBentryALTinterwordstretchfactor}{4}
\providecommand{\BIBentryALTinterwordspacing}{\spaceskip=\fontdimen2\font plus
\BIBentryALTinterwordstretchfactor\fontdimen3\font minus
  \fontdimen4\font\relax}
\providecommand{\BIBforeignlanguage}[2]{{%
\expandafter\ifx\csname l@#1\endcsname\relax
\typeout{** WARNING: IEEEtran.bst: No hyphenation pattern has been}%
\typeout{** loaded for the language `#1'. Using the pattern for}%
\typeout{** the default language instead.}%
\else
\language=\csname l@#1\endcsname
\fi
#2}}
\providecommand{\BIBdecl}{\relax}
\BIBdecl

\bibitem{Gunturk2012_img_restoration}
B.~K. Gunturk and X.~Li, \emph{Image Restoration: Fundamentals and
  Advances}.\hskip 1em plus 0.5em minus 0.4em\relax CRC Press, 2012.

\bibitem{Elad2007_priors}
M.~Elad, P.~Milanfar, and R.~Rubinstein, ``Analysis versus synthesis in signal
  priors,'' \emph{Inverse Problems}, vol.~23, no.~3, p. 947, 2007.

\bibitem{Buades2005_NLM}
A.~Buades, B.~Coll, and J.~M. Morel, ``A non-local algorithm for image
  denoising,'' \emph{Proc. IEEE Conference on Computer Vision and Pattern
  Recognition}, vol.~2, pp. 60--65, 2005.

\bibitem{Dabov2007_BM3D}
K.~Dabov, A.~Foi, V.~Katkovnik, and K.~Egiazarian, ``Image denoising by sparse
  3-{D} transform-domain collaborative filtering,'' \emph{IEEE Transactions on
  Image Processing}, vol.~16, no.~8, pp. 2080--2095, 2007.

\bibitem{Romano2017_RED}
Y.~Romano, M.~Elad, and P.~Milanfar, ``The little engine that could:
  Regularization by denoising ({RED}),'' \emph{SIAM Journal on Imaging
  Sciences}, vol.~10, no.~4, pp. 1804--1844, 2017.

\bibitem{Reehorst2019_RED_clarifications}
E.~T. {Reehorst} and P.~{Schniter}, ``Regularization by denoising:
  {C}larifications and new interpretations,'' \emph{IEEE Transactions on
  Computational Imaging}, vol.~5, no.~1, pp. 52--67, 2019.

\bibitem{Venkatakrishnan2013_PnP}
S.~V. {Venkatakrishnan}, C.~A. {Bouman}, and B.~{Wohlberg}, ``Plug-and-play
  priors for model based reconstruction,'' \emph{Proc. IEEE Global Conference
  on Signal and Information Processing}, pp. 945--948, 2013.

\bibitem{Sreehari2016_PnP}
S.~Sreehari, S.~V. Venkatakrishnan, B.~Wohlberg, G.~T. Buzzard, L.~F. Drummy,
  J.~P. Simmons, and C.~A. Bouman, ``Plug-and-play priors for bright field
  electron tomography and sparse interpolation,'' \emph{IEEE Transactions on
  Computational Imaging}, vol.~2, no.~4, pp. 408--423, 2016.

\bibitem{Ono2017_PD_PnP}
S.~Ono, ``Primal-dual plug-and-play image restoration,'' \emph{IEEE Signal
  Processing Letters}, vol.~24, no.~8, pp. 1108--1112, 2017.

\bibitem{Kamilov2017_PnP_ISTA}
U.~S. {Kamilov}, H.~{Mansour}, and B.~{Wohlberg}, ``A plug-and-play priors
  approach for solving nonlinear imaging inverse problems,'' \emph{IEEE Signal
  Processing Letters}, vol.~24, no.~12, pp. 1872--1876, 2017.

\bibitem{Beck2009_FISTA}
A.~Beck and M.~Teboulle, ``A fast iterative shrinkage-thresholding algorithm
  for linear inverse problems,'' \emph{SIAM Journal on Imaging Sciences},
  vol.~2, no.~1, pp. 183--202, 2009.

\bibitem{Chambolle2011_PD}
A.~Chambolle and T.~Pock, ``A first-order primal-dual algorithm for convex
  problems with applications to imaging,'' \emph{Journal of Mathematical
  Imaging and Vision}, vol.~40, no.~1, pp. 120--145, 2011.

\bibitem{Boyd2011_ADMM}
S.~Boyd, N.~Parikh, E.~Chu, B.~Peleato, and J.~Eckstein, ``Distributed
  optimization and statistical learning via the alternating direction method of
  multipliers,'' \emph{Foundations and Trends in Machine learning}, vol.~3,
  no.~1, pp. 1--122, 2011.

\bibitem{Chan2017_PnP_fixed_point}
S.~H. Chan, X.~Wang, and O.~A. Elgendy, ``Plug-and-play {ADMM} for image
  restoration: {F}ixed-point convergence and applications,'' \emph{IEEE
  Transactions on Computational Imaging}, vol.~3, no.~1, pp. 84--98, 2017.

\bibitem{Hunt1977_bayesian_methods}
B.~R. {Hunt}, ``Bayesian methods in nonlinear digital image restoration,''
  \emph{IEEE Transactions on Computers}, vol. C-26, no.~3, pp. 219--229, 1977.

\bibitem{Teodoro2019_PnP_fusion}
A.~M. Teodoro, J.~M. Bioucas-Dias, and M.~A.~T. Figueiredo, ``A convergent
  image fusion algorithm using scene-adapted gaussian-mixture-based
  denoising,'' \emph{IEEE Transactions on Image Processing}, vol.~28, no.~1,
  pp. 451--463, 2019.

\bibitem{Sun2019_PnP_SGD}
Y.~{Sun}, B.~{Wohlberg}, and U.~S. {Kamilov}, ``An online plug-and-play
  algorithm for regularized image reconstruction,'' \emph{IEEE Transactions on
  Computational Imaging}, vol.~5, no.~3, pp. 395--408, 2019.

\bibitem{Meinhardt2017_learning_prox_op}
T.~Meinhardt, M.~Moller, C.~Hazirbas, and D.~Cremers, ``Learning proximal
  operators: {U}sing denoising networks for regularizing inverse imaging
  problems,'' \emph{Proc. IEEE International Conference on Computer Vision},
  pp. 1781--1790, 2017.

\bibitem{Ryu2019_PnP_trained_conv}
E.~Ryu, J.~Liu, S.~Wang, X.~Chen, Z.~Wang, and W.~Yin, ``Plug-and-play methods
  provably converge with properly trained denoisers,'' \emph{Proceedings of the
  36th International Conference on Machine Learning}, vol.~97, pp. 5546--5557,
  2019.

\bibitem{Dong2018_DNN_prior}
W.~Dong, P.~Wang, W.~Yin, G.~Shi, F.~Wu, and X.~Lu, ``Denoising prior driven
  deep neural network for image restoration,'' \emph{IEEE Transactions on
  Pattern Analysis and Machine Intelligence}, vol.~41, no.~10, pp. 2305--2318,
  2018.

\bibitem{Brifman2016_PnP_superresolution}
A.~{Brifman}, Y.~{Romano}, and M.~{Elad}, ``Turning a denoiser into a
  super-resolver using plug and play priors,'' \emph{Proc. IEEE International
  Conference on Image Processing}, pp. 1404--1408, 2016.

\bibitem{Teodoro2019_targeted_PnP}
A.~{Teodoro}, J.~M. {Bioucas-Dias}, and M.~A.~T. {Figueiredo}, ``Image
  restoration and reconstruction using targeted plug-and-play priors,''
  \emph{IEEE Transactions on Computational Imaging}, 2019.

\bibitem{Tirer2019_iter_denoising}
T.~{Tirer} and R.~{Giryes}, ``Image restoration by iterative denoising and
  backward projections,'' \emph{IEEE Transactions on Image Processing},
  vol.~28, no.~3, pp. 1220--1234, 2019.

\bibitem{Tirer2019_superres_CNN_denoiser}
T.~Tirer and R.~Giryes, ``Super-resolution via image-adapted denoising {CNN}s:
  {I}ncorporating external and internal learning,'' \emph{IEEE Signal
  Processing Letters}, vol.~26, no.~7, pp. 1080--1084, 2019.

\bibitem{Fletcher2018_PnP_analysis}
A.~K. Fletcher, P.~Pandit, S.~Rangan, S.~Sarkar, and P.~Schniter, ``Plug-in
  estimation in high-dimensional linear inverse problems: {A} rigorous
  analysis,'' \emph{Advances in Neural Information Processing Systems}, pp.
  7440--7449, 2018.

\bibitem{Yazaki2019_interp_graph_PnP}
Y.~{Yazaki}, Y.~{Tanaka}, and S.~H. {Chan}, ``Interpolation and denoising of
  graph signals using plug-and-play {ADMM},'' \emph{Proc. IEEE International
  Conference on Acoustics, Speech and Signal Processing}, pp. 5431--5435, 2019.

\bibitem{Chen2018_superres_PnP}
R.~{Chen}, D.~{Zhai}, X.~{Liu}, and D.~{Zhao}, ``Noise-aware super-resolution
  of depth maps via graph-based plug-and-play framework,'' \emph{Proc. IEEE
  International Conference on Image Processing}, pp. 2536--2540, 2018.

\bibitem{Chan2019_PnP_graph_SP}
S.~H. {Chan}, ``Performance analysis of plug-and-play {ADMM}: {A} graph signal
  processing perspective,'' \emph{IEEE Transactions on Computational Imaging},
  vol.~5, no.~2, pp. 274--286, 2019.

\bibitem{Buzzard2018_PnP_CE}
G.~T. Buzzard, S.~H. Chan, S.~Sreehari, and C.~A. Bouman, ``Plug-and-play
  unplugged: {O}ptimization-free reconstruction using consensus equilibrium,''
  \emph{SIAM Journal on Imaging Sciences}, vol.~11, no.~3, pp. 2001--2020,
  2018.

\end{thebibliography}


\begin{thebibliography}{9}

\bibitem{Chan2016}   S. H. Chan,  X. Wang, and O. A. Elgendy, ``Plug-and-play ADMM for image restoration: Fixed-point convergence and applications,'' \textit{IEEE Transactions on Computational Imaging}, vol. 3, no. 1, pp. 84-98, 2016.

\bibitem{code} \url{http://www.mathworks.com/matlabcentral/fileexchange/60641}.

\end{thebibliography}

\newpage
\onecolumn

\textbf{\Large Supplementary material}

\begin{figure}[H]
\setlength{\linewidth}{\textwidth}
\setlength{\hsize}{\textwidth}
\centering
\subfloat[$\eta = 0.1$.]{\includegraphics[width=0.30\linewidth,keepaspectratio]{./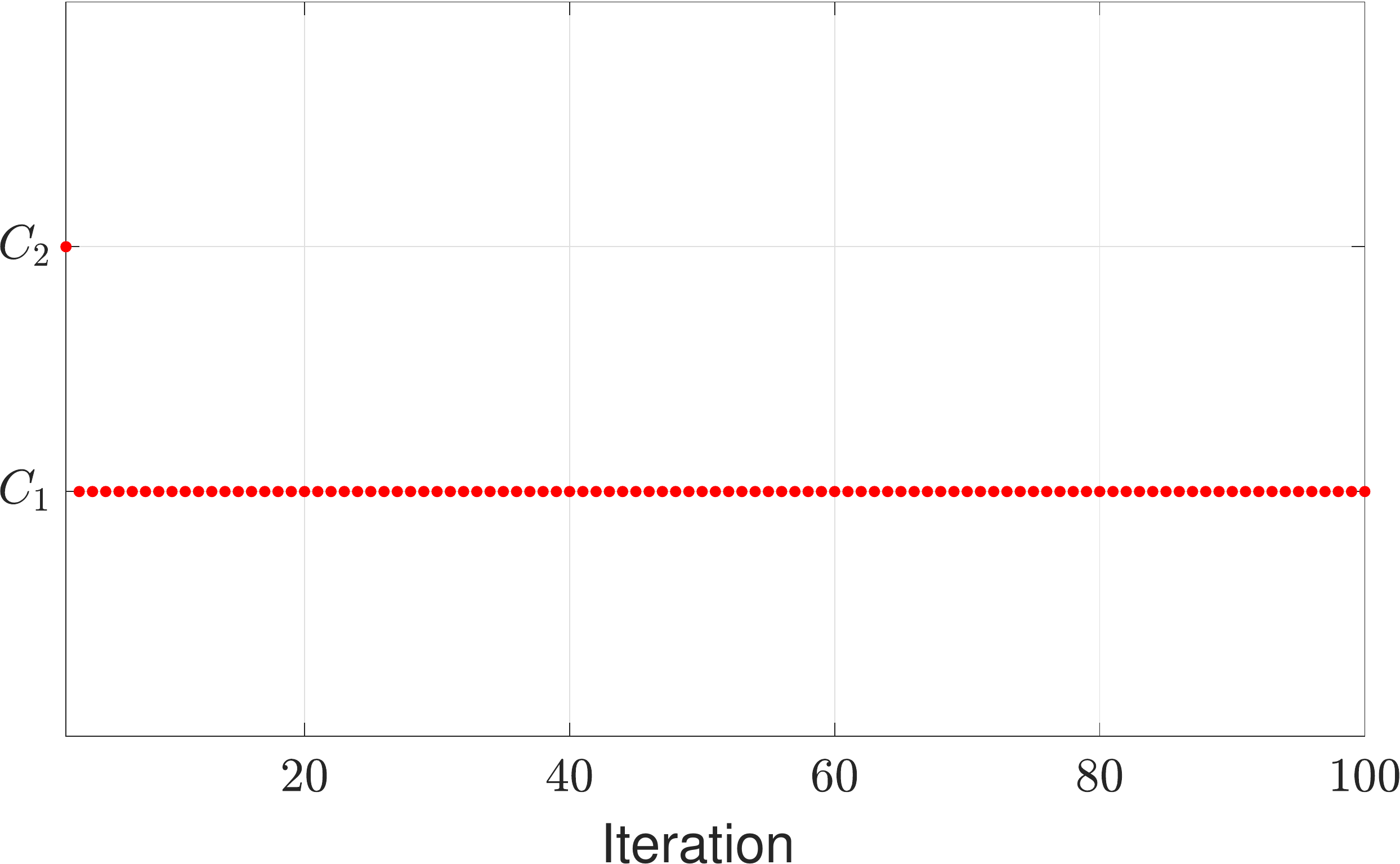}}
\hspace{0.1mm}
\subfloat[$\eta = 0.6$.]{\includegraphics[width=0.30\linewidth,keepaspectratio]{./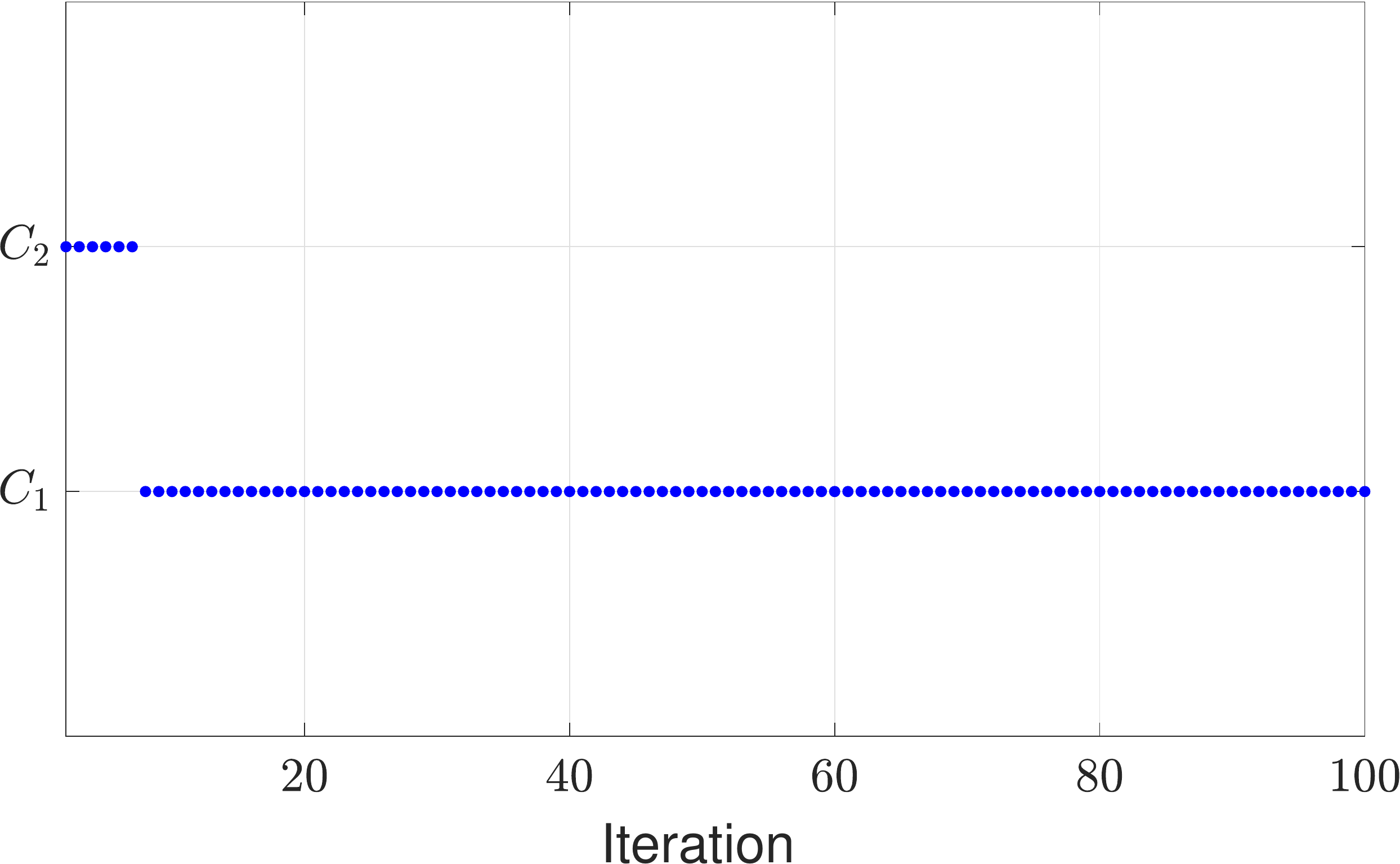}}
\hspace{0.1mm}
\subfloat[$\eta = 0.95$.]{\includegraphics[width=0.30\linewidth,keepaspectratio]{./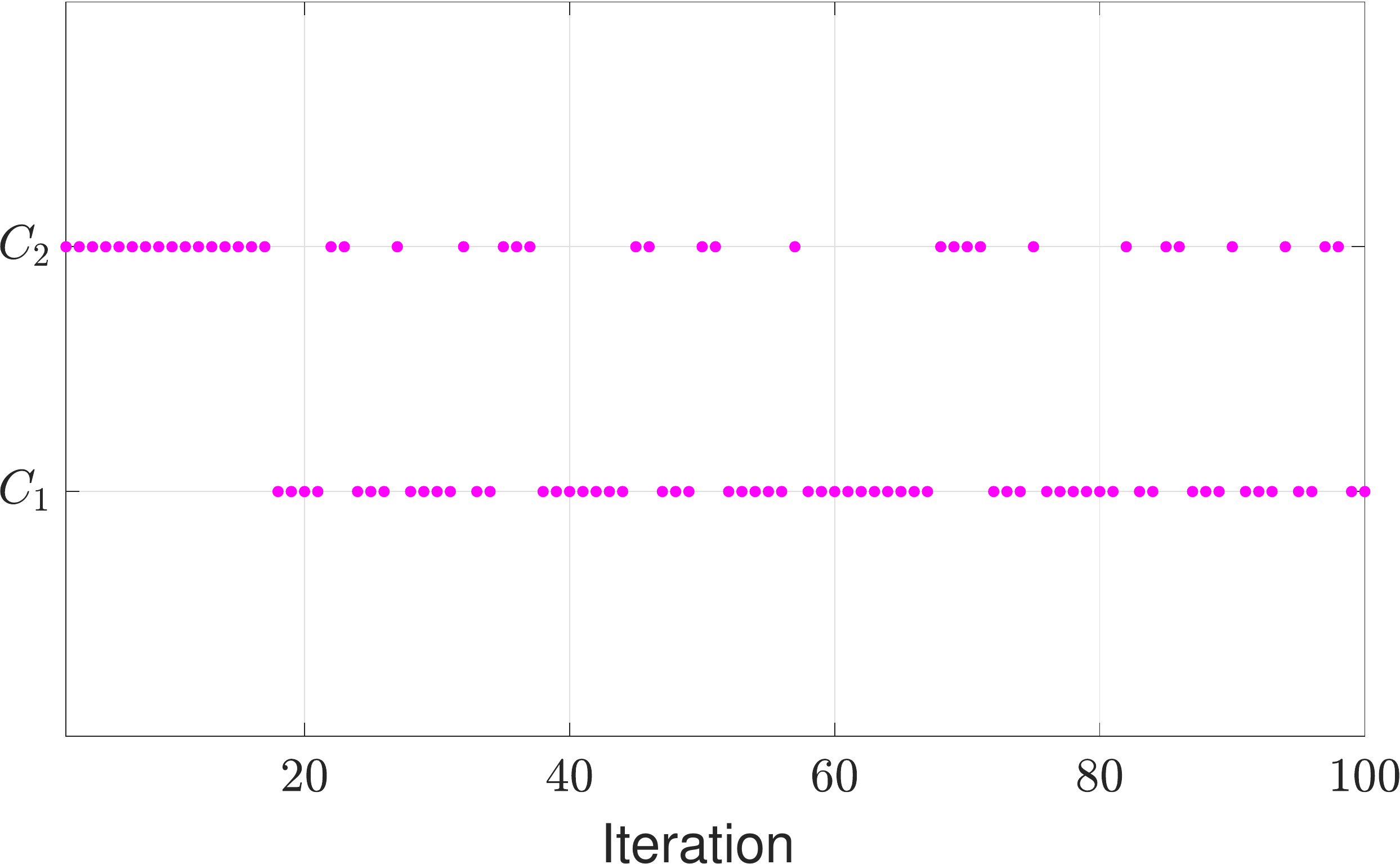}}\\
\subfloat[$\eta = 0.1$.]{\includegraphics[width=0.30\linewidth,keepaspectratio]{./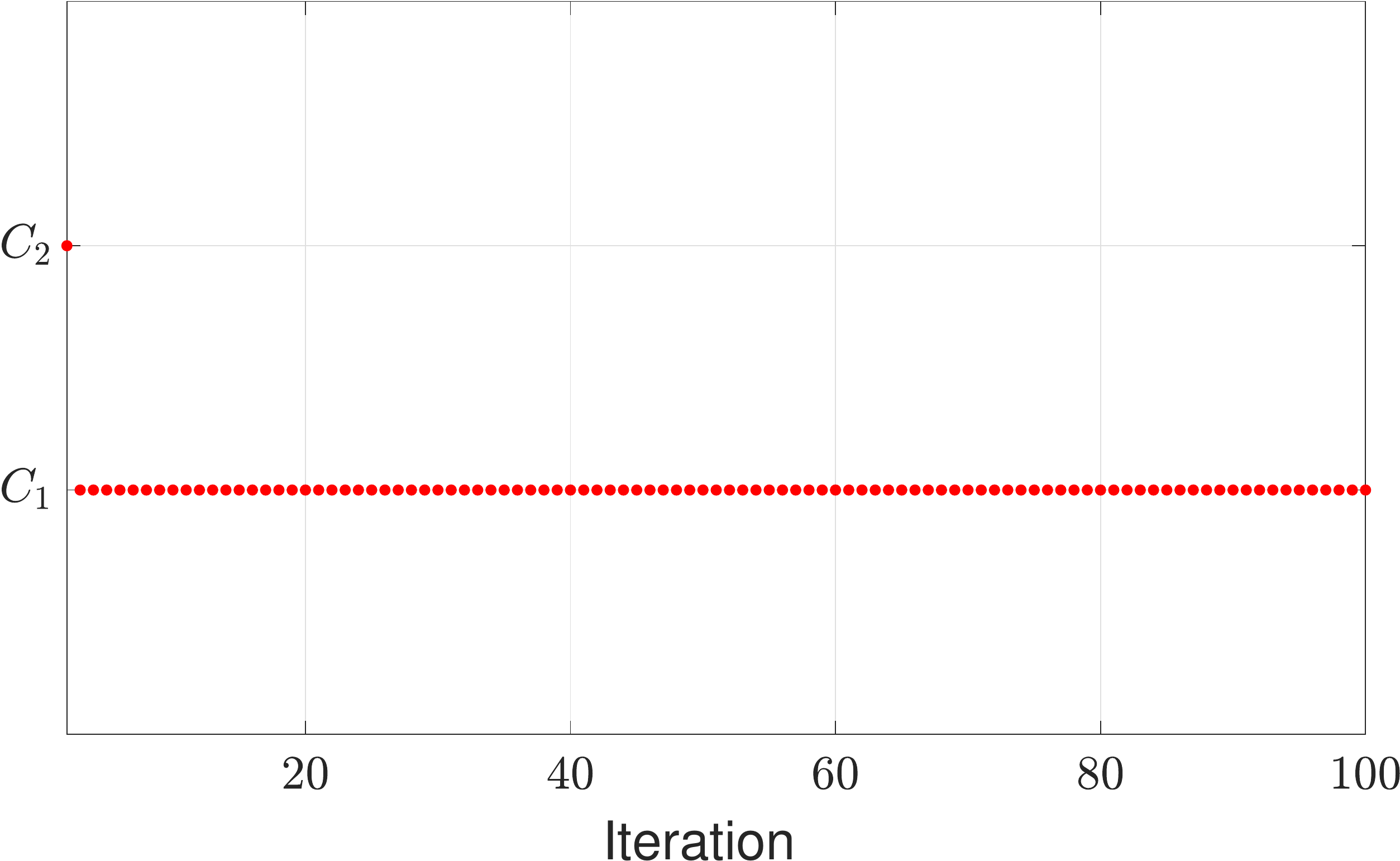}}
\hspace{0.1mm}
\subfloat[$\eta = 0.6$.]{\includegraphics[width=0.30\linewidth,keepaspectratio]{./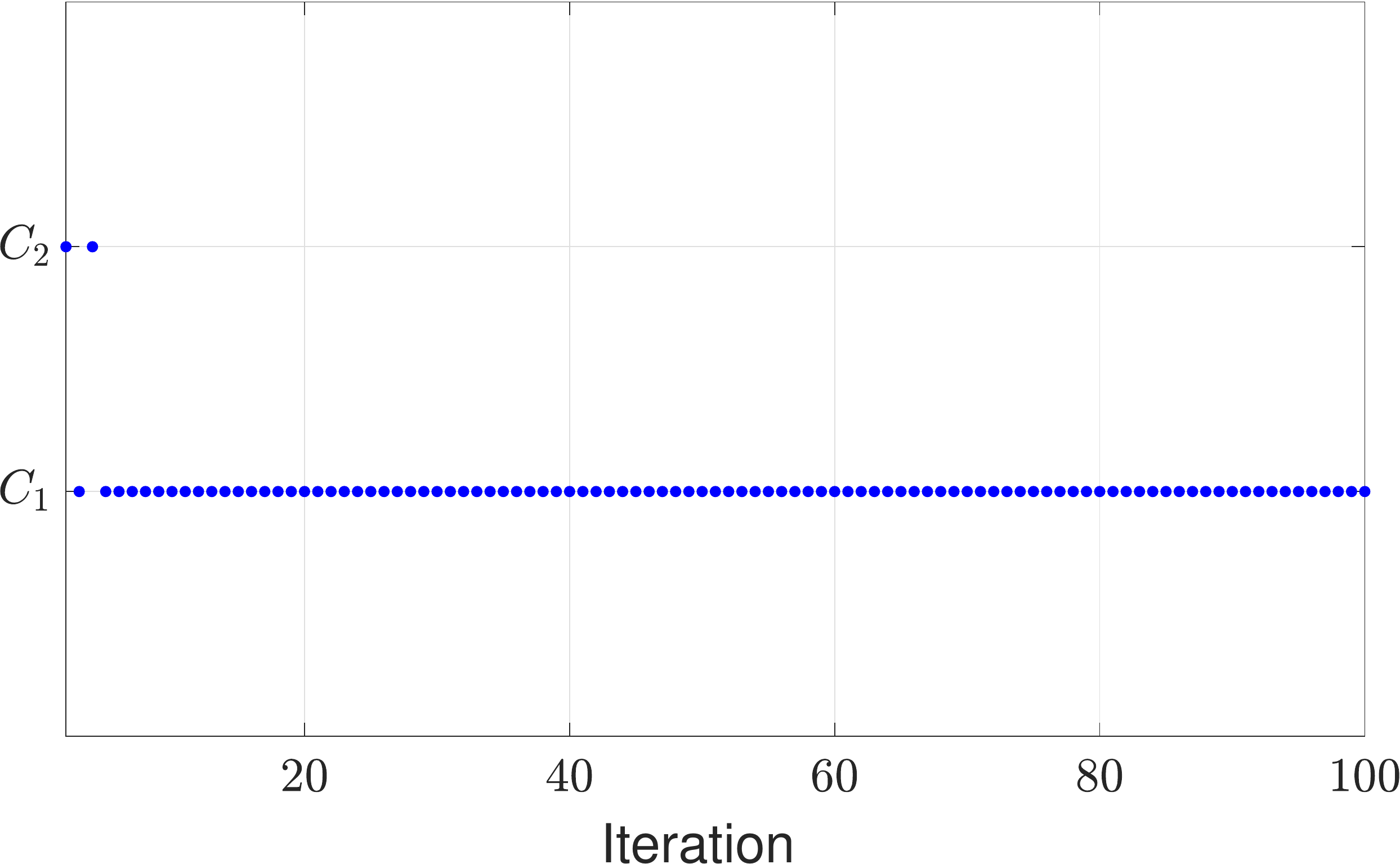}}
\hspace{0.1mm}
\subfloat[$\eta = 0.95$.]{\includegraphics[width=0.30\linewidth,keepaspectratio]{./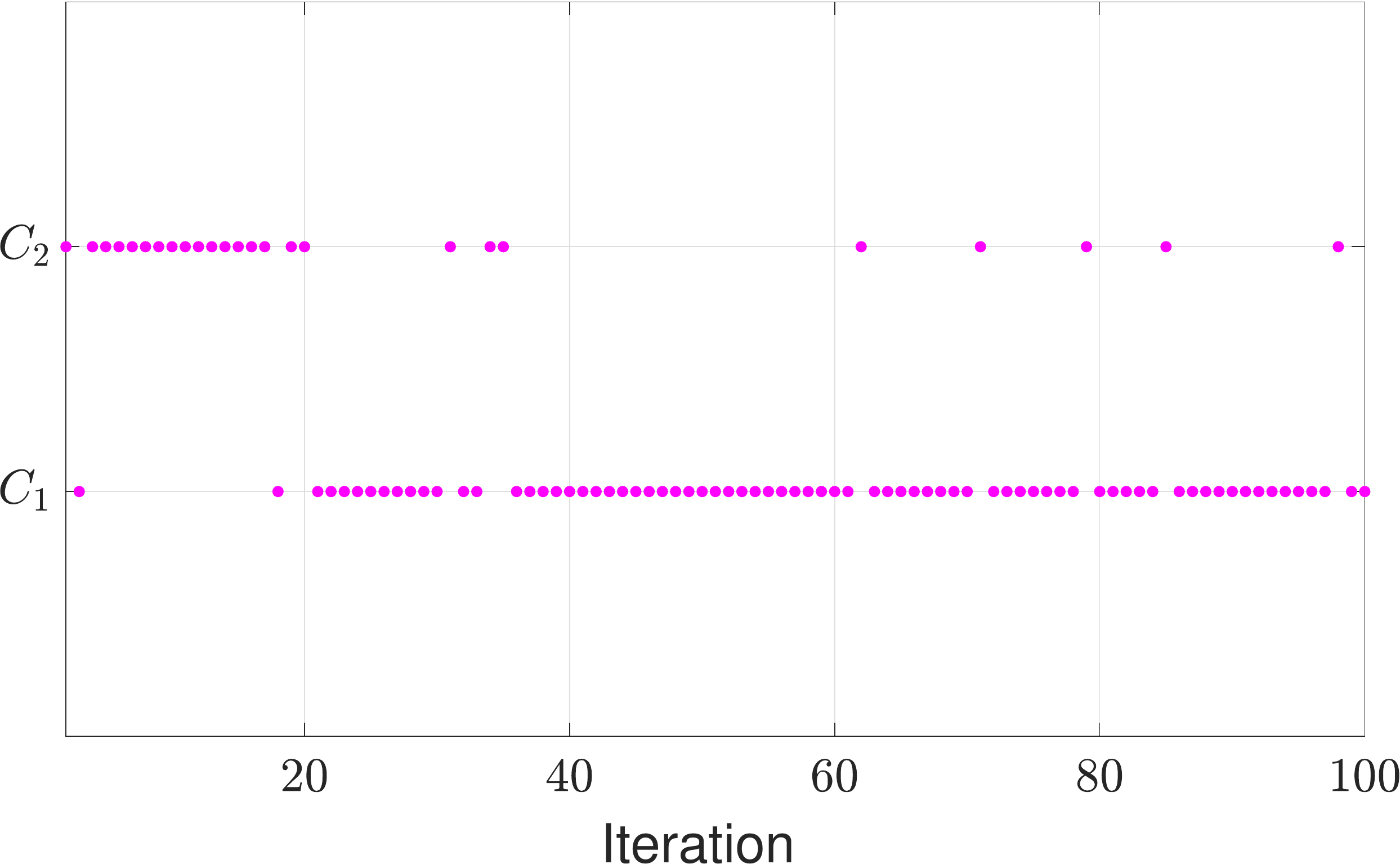}}
\caption{Iteration-wise check for conditions $C_1$ and $C_2$ for the PnP-ADMM algorithm (see main text for details). The top row is for a deblurring experiment and the bottom row for a superresolution experiment; the columns correspond to different settings of $\eta$. The number of iterations is $100$ in all cases.}
\label{fig:cond}
\end{figure}

In this supplement, we report some experimental observations on how the PnP-ADMM algorithm in \cite{Chan2016} switches between conditions $C_1$ and $C_2$. We have considered two image restoration problems, namely deblurring and superresolution. We have used the original Matlab code \cite{code} shared by the authors of \cite{Chan2016}.
We note that in the original code, the update $\rho_{k+1} = \gamma \rho_k$ is used at every iteration.
We have modified it as per the rule described in the manuscript.
The denoiser used in the code is BM3D.
All other settings are the default values used in the original code.
Also, the images used  for deblurring  and superresolution are \textit{house} and \textit{couple}, which are the default images used in the code.

Figure \ref{fig:cond} shows which of the two conditions ($C_1$ or $C_2$) holds for the first $100$ iterations.
This is done for $\eta = 0.1, 0.6$ and $0.95$.
It is observed that if $\eta$ is close to $1$ (rightmost plots), the algorithm tends to frequently switch between $C_1$ and $C_2$. This corresponds to case $S_3$ (provided the switching keeps occuring infinitely often), for which the convergence of PnP-ADMM  is studied in detail in the manuscript.
For smaller values of $\eta$, condition $C_2$ seems to occur only for the first few iterations, after which only $C_1$ holds. This corresponds to case $S_1$, for which the convergence was already studied in \cite{Chan2016}.

\bibliographystyle{IEEEbib}

\end{document}